\newcommand{\R}{\mathbb{R}}
\newcommand{\T}{\mathcal{T}}
\newcommand{\G}{\mathcal{G}}
\newcommand{\tauenv}{\hat{\tau}_{\text{env}}}
\newcommand{\tauD}{\hat{\tau}_{\text{1D}}}
\newcommand{\upenv}{\hat{\upsilon}_{\text{env}}}
\newcommand{\upD}{\hat{\upsilon}_{\text{1D}}}
\newcommand{\Menv}{M_{\text{env}}}
\newcommand{\MD}{M_{\text{1D}}}
\newcommand{\genv}{\hat{g}_{\text{env}}}
\newcommand{\inner}[1]{\langle #1 \rangle}
\DeclareMathOperator{\Var}{Var}
\declaretheoremstyle[
  spaceabove=\topsep, spacebelow=\topsep,
  headfont=\itshape\scshape,
  notefont=\mdseries, notebraces={(}{)},
  bodyfont=\itshape,
  postheadspace=1em,
  qed={}
]{mythmstyle}
\newtheorem{thm}{Theorem} 
\newtheorem{prop}{Proposition} 
\newtheorem{defn}{Definition}
\begin{document}




\begin{small}
\begin{center}

\noindent {\LARGE \bf Combining Envelope Methodology and Aster Models for Variance Reduction in Life History Analyses}

\vspace{.1in}

Daniel J. Eck 

\vspace{.01in}

\textit{Department of Biostatistics, Yale School of Public Health.} \\
\textit{daniel.eck@yale.edu}

\vspace{.1in}

Charles J. Geyer \qquad and \qquad R. Dennis Cook

\vspace{.01in}

\textit{Department of Statistics, University of Minnesota}

\end{center}
\end{small}



\begin{abstract}
Precise estimation of expected Darwinian fitness, the expected lifetime number 
of offspring of organism, is a central component of life history analysis. 
The aster model serves as a defensible statistical model for distributions of 
Darwinian fitness. The aster model is equipped to incorporate the major life 
stages an organism travels through which separately may effect Darwinian fitness. 
Envelope methodology reduces asymptotic variability by establishing 
a link between unknown parameters of interest and the asymptotic covariance 
matrices of their estimators. It is known both theoretically and in 
applications that incorporation of envelope methodology reduces asymptotic 
variability. We develop an envelope framework, including a new envelope 
estimator, that is appropriate for aster analyses.
The level of precision provided from our methods allows researchers to draw 
stronger conclusions about the driving forces of Darwinian fitness from their 
life history analyses than they could with the aster model alone.  
Our methods are illustrated on a simulated dataset and a life history analysis 
of \emph{Mimulus guttatus} flowers is provided. Useful variance 
reduction is obtained in both analyses.
\end{abstract}

Darwinian fitness; fitness landscape; envelope model; parametric bootstrap


\section{Introduction}
\label{sec:intro}

The estimation of expected Darwinian fitness 
is a very important procedure in both biology and genetics. 
The importance of this is not just limited to scientific disciplines, it is 
important for public policy. With genetic theory and simulation studies, 
\cite{burger} shows that, under certain conditions, a changing environment 
leads to extinction of species. In a field study, \cite{etterson} argued that 
the predicted evolutionary response to predicted rates of climate change are 
far too slow. In these papers, and all life history analyses of their kind, 
expected Darwinian fitness is the response variable. The interesting 
scientific conclusions are drawn from it. 

In many life history analyses, values of expected Darwinian fitness are 
plotted using a fitness landscape \citep{lande, shaw}. A fitness landscape is 
the conditional expectation of Darwinian fitness given phenotypic trait values 
considered as a function of those values. When fitness is the response 
variable in a regression model and phenotypic traits are the covariates, the 
fitness landscape is the regression function. Estimation of the fitness 
landscape began with \cite{lande}. 
They use ordinary least squares regression of fitness on phenotypes to 
estimate the best linear approximation of the fitness landscape and quadratic 
regression to estimate the best quadratic approximation. Here ``best'' means 
minimum variance unbiased, as in the Gauss-Markov theorem. Their use of $t$ 
and $F$ tests and confidence intervals requires the assumption that fitness 
is conditionally homoscedastically normally distributed given phenotypic 
trait values.  This assumption is almost always grossly incorrect when one 
uses a good surrogate for Darwinian fitness \citep{Olds, shaw2}.

Aster models \citep{geyer} were designed to fix all of the problems of the 
\cite{lande} approach and of all other approaches to life history analysis 
\citep{shaw2}. The aster model is the state-of-the-art model for all 
life history analyses in which the estimation of expected Darwinian fitness 
is the primary goal. \citet{geyer, shaw2, stanton, shaw2015, eck} show 
various kinds of life history data for which aster models are necessary. 
Assumptions for aster models are given in Section 2 below.

In this article we combine envelope methodology \citep{cook, cook2, su} with 
aster models \citep{geyer, shaw2} to estimate the fitness landscape 
\citep{lande, shaw} in life history analysis. The primary emphasis is that 
this combination of methods estimates the fitness landscape with less 
variability than is possible with aster models alone.
We first show how existing envelope estimators constructed from the 1D 
algorithm \citep{cook2, cook3, zhangmai} can reduce variability in estimation 
of the fitness landscape. We then develop a new envelope estimator that 
avoids the potential numerical pitfalls of the 1D algorithm.
Variance reduction is assessed using parametric bootstrap techniques 
in \citet[Section 4]{efron}. These bootstrap algorithms account for 
variability in model selection. Our methodology provides the most precise 
estimation of expected Darwinian fitness to date. 
Researchers using our methods can therefore draw stronger conclusions about 
the driving forces of Darwinian fitness from their life history analyses. 

In a life history analysis of \emph{M. guttatus} flowers and a simulated 
example, we show that our methodology leads to variance reduction in 
estimation of expected Darwinian fitness when compared with analyses that 
use aster models alone. We show that this variance reduction leads to sharper 
scientific inferences about the potential causes of Darwinian fitness in the 
\emph{M. guttatus} life history analysis. 
Our examples are fully reproducible, and the calculations necessary for their 
reproduction are included in an accompanying technical report 
\citep{ecktech}.

\section{The aster model}
\label{s:model}

Aster models are regular full exponential families.  Parameters are 
estimated by maximum likelihood.  If $Y$ is the response vector, 
$\mu = E(Y)$ is the saturated model mean value parameter, and $M$ 
is the model matrix for an unconditional canonical affine submodel, 
then the maximum likelihood estimate of $\mu$ satisfies 
$M^T \hat{\mu} = M^T y$ \citep[Section 3.2]{geyer}. 
$\tau = M^T \mu$ is the submodel mean value parameter 
\citep[Section~2.4]{geyer}.  Likelihood ratio tests 
for model comparison and confidence intervals for all parameters 
are based on the usual asymptotics and Fisher information are provided 
by R package \texttt{aster} \citep{aster-package}.  In particular, for this 
article we need to know that $\hat{\tau} = M^T y$ is is a minimum 
variance unbiased estimator of the parameter it estimates and its 
exact variance matrix is the Fisher information matrix for the 
submodel canonical parameter $\beta$ (the vector of regression 
coefficients).  That is, in the usual asymptotics of maximum likelihood 
for this parameter the mean and variance are exact not approximate; only 
the normal distribution is approximate. 

The aster model is a directed acyclic graphical model 
\citep[Section 3.2.3]{lauritzen} in which the joint density is 
a product of conditional densities that are specified by the arrows depicted 
in the graph. Lines that appear in the graph specify nodes which are 
dependent. For example, an organism may have multiple paths in their life 
history and can only go down one of them \citep{eck}. 
The aster model follows five assumptions which are:  
A1 The graph of arrows is acyclic.
A2 In the graph of lines every connected component is a complete graph,
   which is called a \emph{dependence group}.
A3 Every node in a dependence group with more than one node has the same
   predecessor (there is an arrow from the predecessor to each
   node in the group).  Every dependence group consisting of exactly
   one node has at most one predecessor.
A4 The joint distribution is the product of conditional distributions,
   one conditional distribution for each dependence group.
A5 Predecessor is sample size, meaning each conditional distribution
   is the distribution of the sum of $N$ independent and identically
   distributed random vectors, where $N$ is the value of the
   predecessor, the sum of zero terms being zero.
A6 The conditional distributions are exponential families having the
   components of the response vector for the dependence group as
   their canonical statistics.
Assumptions A5 and A6 mean for an arrow $y_k \rightarrow y_j$
that $y_j$ is the sum of independent and identically distributed random
variables from the exponential family for the arrow and there are $y_k$
terms in the sum (the sum of zero terms is zero). These assumptions
imply that the joint distribution of the aster model is an exponential family
\citep[Section 2.3]{geyer}. Three of these assumptions have a clear 
biological meaning as well. Assumptions A1 through A3 restricts an individual 
from revisiting life stages that have come to pass. Assumption A5 implies that 
dead individuals remain dead and have no offspring though the course of the 
study. 


These aster models are saturated, having one parameter per component
of the response vector, and are not useful. Hence, as with linear and 
generalized linear models, canonical affine submodels are used with change 
of parameter $\varphi = a + M \beta$, where $\varphi$ is the saturated model 
parameter vector (linear predictor in the terminology of generalized linear 
models), $\beta$ is the submodel parameter (``coefficients'' in R 
terminology), $a$ is the offset vector, and $M$ is the model matrix.
The aster submodel has log likelihood
$
  l(\beta) = \langle M^TY, \beta \rangle - c(a + M\beta)
$
where $Y$ is the response vector and $c(\cdot)$ is the cumulant function of 
the exponential family. 

There are three parameters of interest that are present in the aster analyses
we consider (see \cite{geyer3} and \cite{ecktech} for more detail on aster model 
parameterizations). These parameterizations are 
1) the aster submodel canonical parameter vector $\beta \in \R^p$,
2) the aster submodel mean-value parameter vector $\tau \in \R^p$,
3) the saturated aster model mean-value parameter vector $\mu \in \R^m$, where 
$m$ is the number of individuals sampled multiplied by the number of nodes in 
the aster graph. 
These three parameterizations are all linked via invertible 1-1 
transformations when the model matrix $M$ is of full rank. 
The usual asymptotics of maximum likelihood estimation give  
\begin{equation}\label{asymptotics-MLE-mv}
  \surd{n}\left(\hat{\tau} - \tau \right) \overset{d}{\longrightarrow} 
    N\left(0,\, \Sigma\right),
\end{equation}
where $\Sigma = \Var(M^T Y)$ is the Fisher information matrix associated with 
the canonical parameter vector $\beta$. The maximum likelihood estimator of 
$\beta$ is asymptotically normal with variance given by $\Sigma^{-1}$. 
From \eqref{asymptotics-MLE-mv} and the delta method we can obtain the 
asymptotic distribution for any differentiable function of $\hat{\tau}$. 
The asymptotic distribution for a differentiable function $g$ of $\hat{\tau}$ 
is
\begin{equation}\label{Delta-method-from-mv}
  \surd{n}\left\{g(\hat{\tau}) - g(\tau)\right\} \overset{d}{\longrightarrow} 
    N\left\{0,\; \nabla g(\tau)\Sigma \nabla g(\tau)^T\right\}.
\end{equation}
In particular, the asymptotic distribution of estimated expected Darwinian 
fitness is of interest. Let $h(\mu)$ be expected Darwinian fitness.  
Both $\beta$ being a function of $\tau$ and $\mu = \nabla c(a + M \beta)$ 
imply that 
$
  g(\tau) =  h\Bigl[\nabla c\bigl\{a + M f(\tau) \bigr\}\Bigr]
$
is expected Darwinian fitness as a function of $\tau$ and is differentiable 
if $h$ is differentiable. 
The estimator $g(\hat{\tau})$ has asymptotic distribution given by 
\eqref{Delta-method-from-mv}. We have the potential to do better through the 
incorporation of envelope methodology.

\section{Incorporation of envelope methodology}
\label{s:envelopemodel}

Aster model estimates of expected Darwinian fitness may be too variable to 
be useful, and in consequence we may not be able to statistically distinguish 
estimates of expected Darwinian fitness over the 
fitness landscape.
We address this problem through the incorporation of envelope 
methodology into the aster modelling framework. 

Envelope models were developed originally as a variance reduction tool for 
the multivariate linear regression model. In this article we focus on envelope 
methodology for general vector-valued parameter estimation \citep{cook2}.
Envelope methodology has the potential to reduce the 
variability of any $\surd{n}$-consistent asymptotically normal distributed 
consistent estimator. 
We now define the envelope subspace.
Let $\upsilon$ be a parameter of interest and suppose that $\hat{\upsilon}$ is 
a $\surd{n}$-consistent estimator of $\upsilon$ with asymptotic covariance 
matrix $\Sigma_{\upsilon,\upsilon}$. Let 
$
  \T = \text{span}(\upsilon) =\{a\upsilon : a \in \R\}. 
$ 
The envelope subspace $\mathcal{E}_{\Sigma_{\upsilon,\upsilon}}(\T)$ is 
defined as the intersection of all reducing subspaces of 
$\Sigma_{\upsilon,\upsilon}$ that contain $T$ (a reducing subspace is a sum of 
eigenspaces if all eigenvalues of $\Sigma$ have multiplicity one). 
The envelope space satisfies both
$$
  \T \subset \mathcal{E}_{\Sigma_{\upsilon,\upsilon}}(\T),
    \qquad
  \Sigma_{\upsilon,\upsilon} 
    = P_{\mathcal{E}}\Sigma_{\upsilon,\upsilon}P_{\mathcal{E}} 
    + Q_{\mathcal{E}}\Sigma_{\upsilon,\upsilon}Q_{\mathcal{E}};
$$
where $P_{\mathcal{E}}$ is the projection into the envelope subspace and 
$Q_{\mathcal{E}}$ is the projection into the orthogonal complement. In 
coordinate form, these two envelope conditions are 
$$
  \T \subset \text{span}(\Gamma),
    \qquad 
  \Sigma_{\upsilon,\upsilon} 
    = \Gamma\Omega\Gamma^T + \Gamma_o\Omega_o\Gamma_o^T;
$$
where $(\Gamma,\Gamma_o)$ is a partitioned orthogonal matrix, the columns of 
$\Gamma$ are a basis for $\mathcal{E}_{\Sigma_{\upsilon,\upsilon}}(\T)$, and 
the dimensions of the positive definite matrices $\Omega$ and $\Omega_o$ are 
such that the matrix multiplications are defined. The quantities 
$P_{\mathcal{E}}\Sigma_{\upsilon,\upsilon}P_{\mathcal{E}}$ and 
$\Gamma\Omega\Gamma^T$ are often referred to as 'material information' 
in the envelope literature \citep{cook,cook2} since they represent the 
portion of variability that is necessary for the task of estimating 
$\upsilon$. Similarly, 
$Q_{\mathcal{E}}\Sigma_{\upsilon,\upsilon}Q_{\mathcal{E}}$
and $\Gamma_o\Omega_o\Gamma_o^T$ are referred to as 'immaterial information' 
since they represent extraneous variability.

Intuitively, the envelope estimator reduces variability in estimation at no 
cost to consistency. An illuminating depiction and explanation of how an 
envelope analysis increases efficiency in multivariate linear regression 
problems was given by \citet[pgs. 134--135]{su}. The same intuition applies 
to envelope methodology of \citet{cook2}. 
In applications, there is a cost to estimate 
$u = \text{dim}\left\{\mathcal{E}_{\Sigma_{\upsilon,\upsilon}}(\T)\right\}$ 
and $\Gamma$. With the basis matrix $\Gamma$ estimated, we can then assess the 
variance reduction of the envelope estimator through the parametric bootstrap. 

The 1D algorithm \citep[Algorithm 2]{cook2} estimates a basis matrix 
$\Gamma$ for $\mathcal{E}_{\Sigma_{\upsilon,\upsilon}}(\T)$ at a 
user-supplied envelope dimension $u$. The estimate of $\Gamma$ is obtained by 
providing $\widehat{\Sigma}_{\upsilon,\upsilon}$ and 
$\hat{\upsilon}\hat{\upsilon}^T$ as inputs into the 1D algorithm. The 
resulting estimator of $\Gamma$ obtained from the 1D algorithm, 
$\widehat{\Gamma}_u$, gives a $\surd{n}$ consistent estimator 
$P_{\hat{\mathcal{E}}}$ of the projection onto the envelope subspace 
$P_{\mathcal{E}}$ \citep{cook2}. 
The 1D algorithm can be used to estimate $u$ consistently 
\citep[Theorem 2]{zhangmai}.

Define $\upD = P_{\hat{\mathcal{E}}}\hat{\upsilon}$ to be the envelope 
estimator of $\upsilon$ where $\upsilon$ are parameters that link the 
estimation of expected Darwinian fitness to covariates of interest. We write 
$\tau = (\gamma^T$, $\upsilon^T)^T$ where $\gamma$ are aster model 
parameters not linking covariates to the estimation of Darwinian fitness. 
The envelope estimator of $\tau$ is given as 
$$ 
  \tauD = \left(\begin{array}{c}
    \widehat{\gamma} \\ 
    \upD\end{array}\right) = \left(\begin{array}{cc}
      I & 0 \\
      0 & P_{\hat{\mathcal{E}}}\end{array}\right) M^T Y = \MD^T Y,
  \qquad
  \MD = M\left(\begin{array}{cc}
    I & 0 \\
    0 & P_{\hat{\mathcal{E}}}\end{array}\right).
$$
The model matrix $\MD$ corresponds to the aster model that incorporates 
the envelope structure.

\begin{prop} \label{Prop1D}
The envelope estimator $\tauD$ is a maximum likelihood estimator of $\tau$ for 
the aster model with model matrix $\MD$.
\end{prop}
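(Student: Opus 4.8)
The plan is to recognize the aster model with model matrix $\MD$ as an ordinary canonical affine submodel of the saturated aster model and then read the conclusion off the ``observed equals expected'' likelihood identity recalled in Section~\ref{s:model}.

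The first step is to freeze $P_{\hat{\mathcal{E}}}$, and hence $\MD$, at its estimated value and regard it as a fixed (non-stochastic) model matrix, so that all of the model-selection uncertainty is set aside and treated separately by the bootstrap. Because the columns of $\MD = M\left(\begin{smallmatrix} I & 0\\ 0 & P_{\hat{\mathcal{E}}}\end{smallmatrix}\right)$ lie in the column space of $M$, the map $\beta\mapsto a+\MD\beta$ traces an affine subspace of the saturated model's canonical parameter space, so the aster model with offset $a$ and model matrix $\MD$ is again a regular full exponential family \citep{geyer}, with log likelihood $l(\beta)=\langle \MD^T Y,\beta\rangle - c(a+\MD\beta)$, submodel canonical statistic $\MD^T Y$, and submodel mean-value parameter $\MD^T\mu$ --- which is what $\tau$ denotes for this model, just as $\tau=M^T\mu$ for the model matrix $M$ in Section~\ref{s:model}. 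The second step is the stationarity calculation: $\nabla l(\beta)=\MD^T\bigl(Y-\nabla c(a+\MD\beta)\bigr)=\MD^T\bigl(Y-\mu(\beta)\bigr)$, so every maximizer $\hat\beta$ satisfies $\MD^T\hat\mu=\MD^T Y$. This is the identity $M^T\hat\mu=M^T y$ of Section~\ref{s:model} with $M$ replaced by $\MD$, and it holds whether or not $\MD$ has full column rank; hence the maximum likelihood estimator of $\tau=\MD^T\mu$ is $\MD^T Y$, which equals $\tauD$ by construction.

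The one point requiring care --- and the only place the argument is not a bare tautology --- is the rank of $\MD$: in general $\mathrm{rank}(\MD)=\dim\gamma+u<\dim\tau$, so the submodel canonical parameter $\beta$ is not identifiable, which is why the statement reads ``\emph{a}'' maximum likelihood estimator. What rescues the proposition is that the stationarity equations still pin down the submodel mean-value parameter $\MD^T\mu$ uniquely as $\MD^T Y$ whenever the MLE exists (as we assume throughout, following Section~\ref{s:model}); written in the original coordinates, $\MD^T\mu=(\gamma^T,(P_{\hat{\mathcal{E}}}\upsilon)^T)^T$, which agrees with $\tau=(\gamma^T,\upsilon^T)^T$ exactly in the population envelope model, where $\upsilon\in\mathcal{E}_{\Sigma_{\upsilon,\upsilon}}(\T)$ forces $P_{\mathcal{E}}\upsilon=\upsilon$, and up to $\surd n$-consistency in finite samples. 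So the proof is essentially the one-line computation above plus this bookkeeping, and I anticipate no genuine technical obstacle.
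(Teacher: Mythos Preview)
Your proposal is correct and follows essentially the same route as the paper: write the submodel log likelihood $l(\beta)=\langle \MD^T Y,\beta\rangle - c(a+\MD\beta)$, differentiate, and read off from the stationarity equation that $\MD^T\hat\mu=\MD^T Y=\tauD$. Your additional remarks on freezing $P_{\hat{\mathcal{E}}}$ and on the rank deficiency of $\MD$ are sound bookkeeping that the paper addresses in the surrounding text rather than inside the proof.
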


\begin{proof}
We have
$
   l_{\text{env}}(\beta) = \inner{Y, \MD\beta} - c(\MD\beta)
     = \inner{\MD^TY, \beta} - c(\MD\beta).
$
and
$
  \nabla_{\beta} l(\beta) = \MD^TY - \nabla_{\beta} c(\MD\beta).
$
Setting $\nabla_{\beta} l(\beta) = 0$ and solving for $\beta$ yields 
$\nabla_{\beta} c(\MD\beta)|_{\beta = \hat{\beta}} = \MD^TY = \tauD$. 
\end{proof}

This proposition justifies the use of the transformations 
to switch between maximum likelihood estimators of the different 
aster model parameterizations. 
We compare envelope dimensions $u$ by transforming envelope estimators of 
$\tau$ to envelope estimators of $\beta$ and then evaluate the 
log likelihood at the envelope estimator of $\beta$.
The randomness inherent in $\MD$ is non-problematic in most applications. 
This is because the 1D algorithm provides a $\surd{n}$-consistent envelope 
estimator of $\upsilon$ even when $u$ is estimated. 
Inferences about aster model parameters implicitly assumes that $n$ is large 
enough for the asymptotic normality to be a good approximation for the 
distribution of maximum likelihood estimators.

We then compute an envelope estimator of expected Darwinian fitness 
$g(\tau)$ using the aster model with model matrix $\MD$. 
Note that the model matrix $\MD$ is not of full column rank. 
Therefore the transformations 
used to switch between aster model parameterizations 
are not 1-1. In particular, many distinct estimates of $\beta$ map to 
$\tauD$. Each of these distinct estimated values of $\beta$ maps to the 
same estimate of $\MD\beta$, which in turn maps to a 
common estimate of expected Darwinian fitness. 
The loss of 1-1 transformations is not an issue in this case.




Our estimator of estimated expected Darwinian fitness 
is given by \citet[equation (4.5)]{efron}, with $g(\cdot)$ replacing 
$t(\cdot)$. Steps for this algorithm are given in Algorithm 1. 
When the top-level of our bootstrap procedure (Steps 1 through 4 in 
Algorithm 1) has run for $B$ iterations, we obtain the envelope estimator 
\begin{equation} \label{envest}
  \hat{g}_{\text{1D}} 
    = \frac{1}{B}\sum_{b=1}^B g\left\{\tauD^{(b)}\right\}.
\end{equation}
The envelope estimator $\hat{g}_{\text{1D}}$ implicitly behaves 
as a weighted average with the weights reflecting the likelihood of observing 
a particular estimated value of $u$, $\upD$, and $\MD$. The value of $u$ is 
estimated with either the Bayesian information criterion or Akaike information 
criterion at every iteration of the parametric bootstrap. The intuition is 
that the averaging in $\hat{g}_{\text{1D}}$ will smooth out variability due 
to the estimating $u$ with our chosen model selection criterion. 
Variability of $\hat{g}_{\text{1D}}$ is estimated using the double 
bootstrap technique in \citet[Section 4]{efron}, with steps shown below in 
Algorithm 1. This bootstrap technique 
accounts for all estimation error, including model selection error. 
The reason it accounts for all estimation error, is that all estimation, 
including model selection is done in each iteration of the bootstrap 
(nothing estimated is ever treated as known in bootstrap iterations). 

\noindent Algorithm 1. Parametric bootstrap for assessing the variability of 
the envelope estimator $\hat{g}_{\text{1D}}$:
\begin{enumerate}
\item[1.] Fit the aster model to the data and obtain $\hat{\upsilon}$ 
  and $\widehat{\Sigma}_{\upsilon,\upsilon}$ from the aster model fit. 
\item[2.] Choose a model selection criterion. Compute the envelope 
  estimator of $\upsilon$ in the original sample, given as 
  $\upD = P_{\hat{\mathcal{E}}}\hat{\upsilon}$ where 
  $P_{\hat{\mathcal{E}}}$ is obtained from the 1D algorithm and the chosen 
  model selection criterion. 
\item[3.] Perform a parametric bootstrap by generating samples from the 
  distribution of the aster submodel evaluated at 
  $\tauD = (\hat{\gamma}^T,\upD^T)^T$. For $b=1$, $\ldots$, $B$ 
  of the procedure: 
  \begin{enumerate}
  \item[(3a)] Compute $\hat{\tau}^{(b)}$ and 
    $\widehat{\Sigma}_{\upsilon,\upsilon}^{(b)}$ from the aster model fit to 
    the resampled data.
  \item[(3b)] Compute $P_{\hat{\mathcal{E}}}^{(b)}$ as done in Step 2. 
  \item[(3c)] Compute 
    $\tauD^{(b)} = \left\{\hat{\gamma}^{(b)^T},\upD^{(b)^T}\right\}^T$  
    and $g\left\{\tauD^{(b)}\right\}$.
  \end{enumerate}
\item[4.] The bootstrap estimator of expected Darwinian 
  fitness is the average of the envelope estimators computed in Step 3c. 
  This completes the first part of the bootstrap procedure. 
\item[5.] 
  At $k = 1$, $\ldots$, $K$, for each $b=1$, $\ldots$, $B$ we:
  \begin{enumerate}
  \item[(5a)] Generate data from the distribution of the aster submodel 
    evaluated at $\tauD^{(b)}$. 
  \item[(5b)] Perform Steps 3a through 3c with respect to the dataset 
    obtained in Step 5a to calculate both 
  $\tauD^{(b)^{(k)}}$ and $g\left\{\tauD^{(b)^{(k)}}\right\}$. 
  \end{enumerate}
\item[6.] Compute both $\hat{g}_{\text{1D}}$ and the standard deviation 
  in \citet[equation (4.15)]{efron}.  
\end{enumerate}

\section{A direct envelope estimator using reducing subspaces} 
\label{s:alternative}

We propose a new way of constructing envelope estimators provided that the 
eigenvalues of $\Sigma_{\upsilon,\upsilon}$ have multiplicity one. In this 
section, envelope estimators are constructed directly from the reducing 
subspaces of $\widehat{\Sigma}_{\upsilon,\upsilon}$. 
Uniqueness of the eigenvalues of $\Sigma_{\upsilon, \upsilon}$ 
implies that its reducing subspaces are sums of its eigenspaces.  
Let $\G$ be a reducing subspace of $\Sigma_{\upsilon,\upsilon}$. 
Define $\Gamma_{\G}$ and $P_{\G}$ as the basis matrix for $\G$ 
and the projection onto $\G$ respectively. 
Let $\widehat{\G}$ be the estimator of the reducing subspace $\G$, 
obtained by searching over the 1 dimensional eigenspaces of 
$\widehat{\Sigma}_{\upsilon,\upsilon}$.  
In applications, eigenvalues of $\widehat{\Sigma}_{\upsilon,\upsilon}$ 
are almost always unique. Define $\widehat{\Gamma}_{\widehat{\G}}$ and 
$
  \widehat{P}_{\widehat{\G}} = 
    \widehat{\Gamma}_{\widehat{\G}}\widehat{\Gamma}^T_{\widehat{\G}}
$ 
as estimators of $\Gamma_{\G}$, and $P_{\G}$ respectively. The basis matrix 
$\widehat{\Gamma}_{\widehat{\G}}$ is constructed from the eigenvectors of 
$\widehat{\Sigma}_{\upsilon,\upsilon}$.
We now define the envelope estimator of $\upsilon$ constructed from 
reducing subspaces.

\begin{defn}
The envelope estimator of $\upsilon$ constructed from the reducing 
subspaces $\G$ is defined to be 
$\upenv = \widehat{P}_{\widehat{\G}}\hat{\upsilon}$. 
\end{defn}

The reducing subspaces of $\widehat{\Sigma}_{\upsilon,\upsilon}$ are 
$\surd{n}$ consistent estimators of the reducing subspaces of 
$\Sigma_{\upsilon,\upsilon}$. Therefore $\widehat{\Gamma}_{\widehat{\G}}$, 
$\widehat{P}_{\widehat{\G}}$, and the corresponding estimator 
$\widehat{P}_{\widehat{\G}}\hat{\upsilon}$ are $\surd{n}$ consistent 
estimators of $\Gamma_{\G}$, $P_{\G}$, and $\upsilon$ respectively. 
The envelope estimator of $\tau$ constructed by reducing subspaces is given by 
$$
  \tauenv = \left(\begin{array}{c}
    \widehat{\gamma} \\ 
    \upenv\end{array}\right) = \left(\begin{array}{cc}
      I & 0 \\
      0 & \widehat{P}_{\widehat{\G}}\end{array}\right) M^T Y = \Menv^T Y,
  \qquad
  \Menv = M\left(\begin{array}{cc}
    I & 0 \\
    0 & \widehat{P}_{\widehat{\G}}\end{array}\right).
$$
The model matrix $\Menv$ corresponds to the aster model that incorporates 
the envelope structure obtained from the reducing subspaces of 
$\widehat{\Sigma}_{\upsilon,\upsilon}$. We have a similar result as 
Proposition~\ref{Prop1D} for the aster model with model matrix $\Menv$.

\begin{prop} \label{Propred}
The envelope estimator $\tauenv$ is a maximum likelihood estimator of 
$\tau$ for the aster model with model matrix $\Menv$.
\end{prop}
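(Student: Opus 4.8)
The plan is to mirror the proof of Proposition~\ref{Prop1D}: nothing in that argument used the particular way the projection was obtained, so replacing $\MD$ by $\Menv$ (equivalently, $P_{\hat{\mathcal{E}}}$ by $\widehat{P}_{\widehat{\G}}$) changes nothing of substance. First I would write down the log likelihood of the aster submodel with model matrix $\Menv$,
$$
  l_{\text{env}}(\beta) = \inner{Y,\Menv\beta} - c(\Menv\beta)
    = \inner{\Menv^TY,\beta} - c(\Menv\beta),
$$
so that $\nabla_\beta l_{\text{env}}(\beta) = \Menv^TY - \nabla_\beta c(\Menv\beta)$. Setting the gradient to zero shows that any maximizer $\hat\beta$ satisfies $\nabla_\beta c(\Menv\beta)\big|_{\beta=\hat\beta} = \Menv^TY$. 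Equivalently, by the general aster identity $M^T\hat\mu = M^Ty$ recalled in Section~\ref{s:model}, applied with $\Menv$ in place of $M$, the maximum likelihood estimate $\hat\mu$ of the saturated mean-value parameter in this submodel obeys $\Menv^T\hat\mu = \Menv^TY$. Since $\tau = \Menv^T\mu$ is by definition the submodel mean-value parameter of the model with matrix $\Menv$, its maximum likelihood estimator is $\Menv^T\hat\mu = \Menv^TY$, which is precisely the estimator $\tauenv$ defined in the display preceding the statement.

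Two caveats deserve a sentence, and they are exactly the ones already flagged after Proposition~\ref{Prop1D}. The matrix $\Menv$ is random, since $\widehat{P}_{\widehat{\G}}$ depends on $Y$ through $\widehat{\Sigma}_{\upsilon,\upsilon}$; the proposition is to be read conditionally on $\Menv$, i.e., treating $\Menv$ as the model matrix of a bona fide aster submodel, which is legitimate because for any fixed orthogonal projection the map $\beta \mapsto \Menv\beta$ still parameterizes a regular full exponential family. Also $\Menv$ is not of full column rank, so $\hat\beta$ is not unique; but $\Menv\hat\beta$ is unique, hence so are the induced estimates of $\mu$, $\tau$, and $g(\tau)$, and ``the maximum likelihood estimator of $\tau$'' is unambiguous.

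I do not anticipate a real obstacle: the substance of the proof is just the observation that $\Menv^TY$ is simultaneously the value of $\tauenv$ by construction and the solution of the likelihood equations for the $\Menv$-submodel, so once the estimating equation is written down there is nothing further to verify.
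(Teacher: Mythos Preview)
Your proposal is correct and matches the paper's own treatment exactly: the paper does not give a separate proof but simply states that the proof follows the same steps as Proposition~\ref{Prop1D}, which is precisely what you do by replacing $\MD$ with $\Menv$ in the likelihood-equation argument. Your added remarks about the randomness and rank-deficiency of $\Menv$ are accurate and echo the discussion the paper places after Proposition~\ref{Prop1D}.
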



The proof of Proposition~\ref{Propred} follows the same steps as the proof 
for Proposition~\ref{Prop1D}. 
There is a close connection between envelope estimation using reducing 
subspaces and envelope estimation using the 1D algorithm. In the population, 
$\upenv = \upD$.
The connection between both estimation methods exists in finite samples as 
seen in Theorem~\ref{connection}. In preparation, define orthogonal matrices 
$\widehat{O}_u = \left(\widehat{\Gamma}_u,\widehat{\Gamma}_{uo}\right)$,  
$
  \widehat{O}_{\widehat{\G}} 
    = \left(\widehat{\Gamma}_{\widehat{\G}},
      \widehat{\Gamma}_{\widehat{\G}o}\right), 
$ 
and $\widehat{O} = \widehat{O}_{\widehat{\G}}\widehat{O}_u^T$. 
The matrices $\widehat{\Gamma}_{\widehat{\G}}$, 
$\widehat{\Gamma}_{\G o}$, $\widehat{\Gamma}_{u}$, and $\widehat{\Gamma}_{uo}$ 
converges in probability to $\Gamma_{\G}$, $\Gamma_{\G o}$, $\Gamma_{u}$, 
and $\Gamma_{uo}$ respectively. Now 
$\Gamma_{\G}^T\Gamma_{u}$ and $\Gamma_{\G o}^T\Gamma_{uo}$ are both $0$-$1$ 
valued rotation matrices and 
$\Gamma_{\G}^T\Gamma_{uo} = \Gamma_{\G o}^T\Gamma_{u} = 0$. These facts 
imply that $\widehat{O}$ converges in probability to a $0$-$1$ valued rotation  
matrix. We will assume that $\widehat{O} \overset{p}{\to} I$ without loss 
of generality.

\begin{thm} \label{connection}
The basis matrix $\widehat{\Gamma}_{\widehat{\G}}$ is the output of the 1D 
algorithm with inputs 
$\widehat{M} = \widehat{O}\widehat{\Sigma}_{\upsilon,\upsilon}\widehat{O}^T$ 
and 
$\widehat{U} = \widehat{O}\hat{\upsilon}\hat{\upsilon}^T\widehat{O}^T$ 
at dimension $u$. 
\end{thm}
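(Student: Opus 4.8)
The plan is to deduce the identity from the \emph{orthogonal equivariance} of the 1D algorithm. Precisely, I would first argue that if $\widehat{\Gamma}_u$ denotes the basis matrix returned by the 1D algorithm at dimension $u$ on a pair of inputs $(M,U)$, then for any orthogonal matrix $O$ the basis matrix returned on the rotated pair $(OMO^{T},OUO^{T})$ equals $O\widehat{\Gamma}_u$. Granting this, the theorem follows by taking $M=\widehat{\Sigma}_{\upsilon,\upsilon}$, $U=\hat{\upsilon}\hat{\upsilon}^{T}$, and $O=\widehat{O}$: the 1D algorithm on $(\widehat{M},\widehat{U})$ then returns $\widehat{O}\widehat{\Gamma}_u$, and it remains only to check that $\widehat{O}\widehat{\Gamma}_u=\widehat{\Gamma}_{\widehat{\G}}$.

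For the equivariance claim I would unwind the 1D algorithm one step at a time. At stage $k$ the algorithm has orthonormal directions $g_1,\dots,g_{k-1}$ in hand and chooses $g_k$ as the unit-norm minimizer, over the orthogonal complement of $\mathrm{span}(g_1,\dots,g_{k-1})$, of an objective $\phi_k(w)$ that depends on the inputs only through the quadratic forms $w^{T}Mw$ and $w^{T}(M+U)^{-1}w$ (and, if the implementation deflates $M$ and $M+U$ between steps, only through maps of the form $A\mapsto(I-g_jg_j^{T})A(I-g_jg_j^{T})$). Under the change of variable $w\mapsto Ow$ one has $(Ow)^{T}(OMO^{T})(Ow)=w^{T}Mw$, and since $(OMO^{T}+OUO^{T})^{-1}=O(M+U)^{-1}O^{T}$ also $(Ow)^{T}(OMO^{T}+OUO^{T})^{-1}(Ow)=w^{T}(M+U)^{-1}w$; orthogonality of $w$ to earlier directions is preserved because $(Og_j)^{T}(Ow)=g_j^{T}w$; and the deflation maps satisfy $(I-(Og_j)(Og_j)^{T})(OAO^{T})(I-(Og_j)(Og_j)^{T})=O(I-g_jg_j^{T})A(I-g_jg_j^{T})O^{T}$. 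Hence the stage-$k$ subproblem for $(OMO^{T},OUO^{T})$ with prior directions $Og_1,\dots,Og_{k-1}$ is exactly the image under $w\mapsto Ow$ of the stage-$k$ subproblem for $(M,U)$, so its minimizer is $Og_k$. An induction on $k$ gives that the whole selected collection, and hence the returned basis matrix, is $O\widehat{\Gamma}_u$.

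It then suffices to identify $\widehat{O}\widehat{\Gamma}_u$ with $\widehat{\Gamma}_{\widehat{\G}}$, which is a one-line computation from the definition $\widehat{O}=\widehat{O}_{\widehat{\G}}\widehat{O}_u^{T}$. Since $\widehat{O}_u=(\widehat{\Gamma}_u,\widehat{\Gamma}_{uo})$ is orthogonal, $\widehat{O}_u^{T}\widehat{\Gamma}_u=(I_u,\,0)^{T}$, so $\widehat{O}\widehat{\Gamma}_u=\widehat{O}_{\widehat{\G}}(I_u,\,0)^{T}$ is the matrix of the first $u$ columns of $\widehat{O}_{\widehat{\G}}$, namely $\widehat{\Gamma}_{\widehat{\G}}$. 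Combining this with the previous paragraph completes the argument. I would emphasize that the resulting statement is an exact finite-sample identity, so the asymptotic normalization $\widehat{O}\overset{p}{\to}I$ assumed just above the theorem is not needed in its proof.

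The part I expect to require the most care is not any of these calculations but the sense in which the 1D output is well defined. The stage-$k$ objective is even in $w$, so its minimizing direction is determined only up to sign, and if $\widehat{\Sigma}_{\upsilon,\upsilon}$ has repeated eigenvalues further non-uniqueness (rotation within an eigenspace) can enter. Under the paper's standing assumption that the eigenvalues of $\widehat{\Sigma}_{\upsilon,\upsilon}$ are distinct --- which, as already noted, holds in essentially all applications --- each stage has a unique minimizing direction, and since any fixed sign convention is itself carried along by $w\mapsto Ow$, the equivariance, and therefore the theorem, holds literally; in the degenerate case the statement should be read modulo the usual non-identifiability of an orthonormal basis of a reducing subspace.
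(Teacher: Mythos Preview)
Your proposal is correct and follows essentially the same route as the paper's proof: both arguments exploit the fact that the 1D objective is invariant under the orthogonal substitution $g\mapsto \widehat{O}^{T}g$, so that $\hat v_k=\widehat{O}\hat g_{uk}$ optimizes the rotated problem precisely when $\hat g_{uk}$ optimizes the original one, and then both identify $\widehat{O}\widehat{\Gamma}_u$ with $\widehat{\Gamma}_{\widehat{\G}}$ via $\widehat{O}_u^{T}\widehat{\Gamma}_u=(I_u,0)^{T}$. Your version is in fact more careful than the paper's, which writes out only the equality of objective values at the matched points and does not explicitly track the inter-step orthogonality/deflation constraints that you handle by induction.
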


\begin{proof}
Let $\widehat{M}_2 = \widehat{\Sigma}_{\upsilon,\upsilon}$ and 
$\widehat{U}_2 = \hat{\upsilon}\hat{\upsilon}^T$.
Similar to the proof of \cite[Proposition 6]{cook3}, let 
\begin{align*}
  Q_n(g) &= -n/2\log(g^T\widehat{O}\widehat{M}_2\widehat{O}^Tg) 
    - n/2\log\left[g^T\left\{\widehat{O}\left(\widehat{M}_2 + \widehat{U}_2\right)\widehat{O}^T\right\}^{-1}g\right]
    + n \log(g^T g) \\
  &= -n/2\log(g^T\widehat{O}\widehat{M}_2\widehat{O}^Tg) 
    - n/2\log\left\{g^T\widehat{O}\left(\widehat{M}_2 + \widehat{U}_2\right)^{-1}\widehat{O}^Tg\right\}
    + n \log(g^T\widehat{O}\widehat{O}^T g).  
\end{align*}
Now let $\hat{v}_k$, $k = 1$, $\ldots$, $u$ be the $k$th column of 
$\widehat{\Gamma}_{\G}$ and define $A \in \R^{p\times p}$ to be a matrix of 
$0$'s with $1$'s occupying the first $k$ diagonal entries. Then 
\begin{align*}
  Q_n(\hat{v}_k) 
  &= -n/2\log(\hat{v}_k^T\widehat{O}\widehat{M}_2\widehat{O}^T\hat{v}_k) 
    - n/2\log\left\{\hat{v}_k^T\widehat{O}\left(\widehat{M}_2 
       + \widehat{U}_2\right)^{-1}\widehat{O}^T\hat{v}_k\right\} \\
    &\qquad+ n \log(\hat{v}_k^T\widehat{O}^T\widehat{O}\hat{v}_k) \\
  &= - n/2\log(A_k \widehat{O}_u\widehat{M}_2\widehat{O}_u^TA_k) 
     - n/2\log\left\{A_k\widehat{O}_u\left(\widehat{M}_2 
       + \widehat{U}_2\right)^{-1}\widehat{O}_u^TA_k\right\} \\
    &\qquad+ n \log(A_k\widehat{O}_u^T\widehat{O}_u A_k) \\
  &= - n/2\log(\hat{g}_{uk}^T\widehat{M}_2\hat{g}_{uk}) 
     - n/2\log\left\{\hat{g}_{uk}^T\left(\widehat{M}_2 
       + \widehat{U}_2\right)^{-1}\hat{g}_{uk}\right\} \\
    &\qquad+ n \log(\hat{g}_{uk}^T\hat{g}_{uk})
\end{align*}
where $\hat{g}_{uk}$ is the $k$th column of $\widehat{\Gamma}_u$, 
the output of the 1D algorithm with $\widehat{M}_2$ 
and $\widehat{U}_2$ as inputs. Therefore $\hat{v}_k$ is a maximizer 
of $Q_n(g)$ and this completes the proof.
\end{proof}

Theorem~\ref{connection} in combination with \citet[Theorem 2]{zhangmai} 
allows for us to estimate $\G$ consistently. 
Thus the variability associated with the estimation of $\upenv$ and 
$\G$ decreases as $n\to\infty$. However in practical applications correct 
model selection cannot be guaranteed.
Therefore the envelope estimator of expected Darwinian fitness 
$g(\tauenv)$ has an extra source of variability due to model selection 
uncertainty. 
We develop a double bootstrap procedure with steps similar to those in 
Algorithm 1 to account for variability in model selection. 
The first level of the bootstrap procedure provides the 
estimator of expected Darwinian fitness,  
\begin{equation} \label{genv}
  \hat{g}_{\text{env}} 
    = \frac{1}{B}\sum_{b=1}^B g\left\{\tauenv^{(b)}\right\}.
\end{equation}
The same model selection criteria is used to select the reducing subspace used 
to construct $\tauenv^{(b)}$ at every iteration $b = 1$, $\ldots$, $B$.
The second of level of this bootstrap procedure estimates the variability 
of \eqref{genv}. The steps for this algorithm are provided in \citet{ecktech}. 
The utility of our double bootstrap procedure is shown in Section 5.1 where, in 
that example, there is considerable disagreement between model selection 
criteria. 

When $k$ is small, $\upenv$ is preferable to $\upD$.
At any iteration of the 1D algorithm, minimizers of the objective function 
stated in \cite[Algorithm 2]{cook2} are pulled towards reducing subspaces of 
$\widehat{\Sigma}_{\upsilon,\upsilon}$. This objective function is non-convex 
and contains potentially many local minima. The optimizations conducted within 
the 1D algorithm are sensitive to starting values and can get stuck at these 
local minima. This undermines the 1D algorithm since it is required that users 
find global minima for its justification. Unlike the 1D algorithm, the 
reducing subspace approach does not involve any optimization routines. 
However when $k$ is moderately large, the computation 
of all of candidate envelope estimators at each reducing subspace is too 
computationally intensive, there are $2^k - 1$ possible reducing subspaces in 
non trivial problems. The 1D algorithm may still be fast when $k$ is 
moderately large, only $k-1$ optimizations in non trivial problems.
In most aster applications $k$ is small since data is obtained through 
expensive collection methods and fitness landscapes are 
low-dimensional \citep{shaw, eck}.

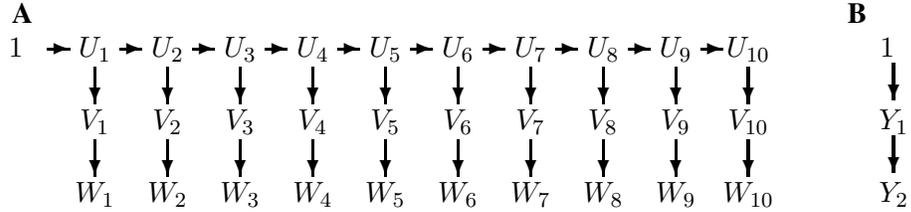
\begin{figure*}[t]
\begin{center}
    \centering
    \vspace{0pt}
    \setlength{\unitlength}{0.38 in}
    \thicklines
    \begin{picture}(10.3,2.8)(-0.1,-2.3)
      \put(0, 0.5){\makebox(0,0){\textbf{A}}}
      \put(11.5, 0.5){\makebox(0,0){\textbf{B}}}
      \put(0,0){\makebox(0,0){$1_{\hphantom{0}}$}}
      \put(1,0){\makebox(0,0){$U_1$}}
      \put(2,0){\makebox(0,0){$U_2$}}
      \put(3,0){\makebox(0,0){$U_3$}}
      \put(4,0){\makebox(0,0){$U_4$}}
      \put(5,0){\makebox(0,0){$U_5$}}
      \put(6,0){\makebox(0,0){$U_6$}}
      \put(7,0){\makebox(0,0){$U_7$}}
      \put(8,0){\makebox(0,0){$U_8$}}
      \put(9,0){\makebox(0,0){$U_9$}}
      \put(10,0){\makebox(0,0){$U_{10}$}}
      \multiput(0.35,0)(1,0){10}{\vector(1,0){0.3}}
      \put(1,-1){\makebox(0,0){$V_1$}}
      \put(2,-1){\makebox(0,0){$V_2$}}
      \put(3,-1){\makebox(0,0){$V_3$}}
      \put(4,-1){\makebox(0,0){$V_4$}}
      \put(5,-1){\makebox(0,0){$V_5$}}
      \put(6,-1){\makebox(0,0){$V_6$}}
      \put(7,-1){\makebox(0,0){$V_7$}}
      \put(8,-1){\makebox(0,0){$V_8$}}
      \put(9,-1){\makebox(0,0){$V_9$}}
      \put(10,-1){\makebox(0,0){$V_{10}$}}
      \multiput(1,-0.25)(1,0){10}{\vector(0,-1){0.5}}
      \put(1,-2){\makebox(0,0){$W_1$}}
      \put(2,-2){\makebox(0,0){$W_2$}}
      \put(3,-2){\makebox(0,0){$W_3$}}
      \put(4,-2){\makebox(0,0){$W_4$}}
      \put(5,-2){\makebox(0,0){$W_5$}}
      \put(6,-2){\makebox(0,0){$W_6$}}
      \put(7,-2){\makebox(0,0){$W_7$}}
      \put(8,-2){\makebox(0,0){$W_8$}}
      \put(9,-2){\makebox(0,0){$W_9$}}
      \put(10,-2){\makebox(0,0){$W_{10}$}}
      \multiput(1,-1.25)(1,0){10}{\vector(0,-1){0.5}}
      \put(12,0){\makebox(0,0){$1_{\hphantom{0}}$}}
      \put(12,-1){\makebox(0,0){$Y_1$}}
      \put(12,-2){\makebox(0,0){$Y_2$}}
      \multiput(12,-0.2)(0,-1){2}{\vector(0,-1){0.5}}
    \end{picture}
  \end{center}
  \caption{ 
    \textbf{(A)} Graphical structure of the aster model for the simulated data 
    in Example 1. The top layer corresponds to survival; these random 
    variables are Bernoulli. The middle layer corresponds to whether or not an 
    individual reproduced; these random variables are also Bernoulli. The 
    bottom layer corresponds to offspring count; these random variables are 
    zero-truncated Poisson. 
    \textbf{(B)} Graphical structure of the aster model for the data in 
    Example 2. The first arrow corresponds to survival which is a Bernoulli 
    random variable. The second arrow corresponds to reproduction count 
    conditional on survival which is a zero-truncated Poisson random variable.
  }
  \label{Fig:graphs}
\end{figure*}

\section{Examples}
\label{s:examples}

\subsection{Simulated Data}
A population of 3000 organisms was simulated to form the dataset used in this 
aster analysis. These data were generated according to the graphical 
structure appearing in panel A of Figure~\ref{Fig:graphs}. There are two 
covariates $(z_1,z_2)$ associated with Darwinian fitness and the aster model 
selected by the likelihood ratio test is a full quadratic model with respect 
to these covariates. 

We partition $\tau$ into $(\gamma^T, \upsilon^T)^T$ where $\gamma \in \R^4$ 
are nuisance parameters and $\upsilon \in \R^5$ are relevant to the estimation 
of expected Darwinian fitness. Here, $\upsilon \in \R^5$ because our model is 
full quadratic in $z_1$ and $z_2$. In this example, the true reducing subspace 
is the space spanned by the first and fourth eigenvectors of the covariance 
matrix of the parameters of interest estimated from the original data. 
We begin by considering envelope estimators constructed using the 1D algorithm. 
The Akaike information criterion and the Bayesian information criterion 
both select $u = 5$. 
Envelope methods are not interesting in this case. 

We now consider envelope estimators constructed from reducing subspaces. 
In the original sample, 
the Bayesian information criterion 
selects the reducing subspace that is the sum of the first, fourth, 
and fifth eigenspaces of $\widehat{\Sigma}_{\upsilon,\upsilon}$ numbered in 
order of decreasing eigenvalues. 
This suggests that the dimension of the envelope space is $u = 3$. The 1D 
algorithm and the reducing subspace approach are in disagreement, consistency 
of model selection is not helpful in this application. We turn to the double 
parametric bootstrap to estimate $\genv$ and the asymptotic variability of 
$\genv$. The Bayesian information criterion is used to select $\G$ at every 
iteration of the first level of the bootstrap. 

The results are seen in Table~\ref{Tab5:target.env}. 
Table~\ref{Tab5:target.env} shows seven individuals that had high values 
of estimated expected Darwinian fitness. Each individual has a unique set of 
traits. The first two columns display the envelope estimator of 
expected Darwinian fitness $\genv$ and its bootstrapped standard error. 
The maximum likelihood estimator of expected Darwinian fitness and its 
bootstrapped standard error are displayed in the third and fourth columns 
respectively. The ratios of bootstrapped standard errors for 
$\hat{g}_{\text{MLE}}$ to $\hat{g}_{\text{env}}$ are displayed in the final 
column. We can see that all of the ratios are greater than 1 which indicates 
that the envelope estimator of expected Darwinian fitness is less variable 
than the maximum likelihood estimator.

\begin{table}
\vspace*{-6pt}
\centering
\def\~{\hphantom{0}}
\caption{Comparison of the maximum likelihood estimator and the 
  envelope estimator for seven individuals with high estimates of 
  expected Darwinian fitness in Example 1. The fifth column is 
  the ratio of bootstrapped standard errors for 
  $\hat{g}_{\text{MLE}}$ to $\hat{g}_{\text{env}}$.}
\label{Tab5:target.env}
  \begin{tabular*}{\columnwidth}{c@{\extracolsep{\fill}}c@{\extracolsep{\fill}}c@{\extracolsep{\fill}}c@{\extracolsep{\fill}}c@{\extracolsep{\fill}}c@{}}
 $\hat{g}_{\text{env}}$ & se$\left(\hat{g}_{\text{env}}\right)$ & $\hat{g}_{\text{MLE}}$ & se$\left(\hat{g}_{\text{MLE}}\right)$ & ratio \\
8.556 & 0.174 & 8.701 & 0.260 & 1.491 \\
9.014 & 0.111 & 8.939 & 0.135 & 1.222 \\
7.817 & 0.414 & 8.054 & 0.442 & 1.069 \\
9.174 & 0.163 & 9.193 & 0.170 & 1.045 \\
9.018 & 0.113 & 9.120 & 0.128 & 1.133 \\
8.612 & 0.162 & 8.518 & 0.278 & 1.709 \\
7.761 & 0.215 & 8.096 & 0.331 & 1.534 \\
\end{tabular*}\vskip18pt
\end{table}

Contour plots of the ratios of estimated standard errors are displayed in the 
technical report \citep{ecktech}. These contour plots show that the envelope 
estimator of expected Darwinian fitness is less variable than the maximum 
likelihood estimator for the majority of the observed data. The region where 
the envelope estimator is less variable includes the values of $z_1$ and $z_2$ 
that maximize estimated expected Darwinian fitness. Variance reduction is also 
obtained when we use the reducing subspace suggested by the Akaike information 
criterion. This is also shown in \citet{ecktech}.

\subsection{\emph{M. guttatus} aster analysis}
The yellow monkeyflower \emph{M. guttatus} has been and is currently a well 
studied flower \citep{lowry, hall, ritland, allen, vickery}. 
\emph{M. guttatus} is a species which comprises many morphologically variable 
populations growing in moist places such as stream banks, meadows and springs 
over a range that extends from the Aleutian Islands to Mexico and from the 
California coast to the Rocky Mountains \citep{vickery}. The lifecycle of the 
individual \emph{M. guttatus} flowers, for our life history analysis, is 
depicted in panel B of Figure~\ref{Fig:graphs}. 
\citet{lowry} performed a life history analysis of \emph{M. guttatus} using 
aster models. One of their interests was to determine which levels of 
genetic background, field site, inversion orientation, and ecotype of the 
flower are associated with high Darwinian fitness. We show that the set of 
candidate trait values thought to maximize expected Darwinian fitness is 
smaller when envelope methodology is incorporated.

\citet{lowry} collected measurements on 2313 \emph{M. guttatus}.
We fit a linear fitness landscape to this data. 
The parameters $\upsilon \in \R^6$ are relevant to the estimation of expected
Darwinian fitness. In the original sample, the Bayesian information criterion 
leads to a selection of a reducing subspace that is the sum of all eigenspaces 
of $\widehat{\Sigma}_{\upsilon,\upsilon}$ with the exception of the fourth 
and fifth eigenspaces. The parametric double bootstrap procedure outlined in 
Sections 3 and 4 is used to estimate the variability of $\genv$ where the 
Bayesian information criterion is used to select $\G$ at every iteration of 
the first level of the bootstrap. 

Table~\ref{Tab6:target.env} shows the results for seven individuals that have 
high values of estimated expected Darwinian fitness through maximum 
likelihood and envelope estimation. Each individual has a unique set of 
traits. We see that both methods agree on the trait values that are expected 
to maximize expected Darwinian fitness. 
We also see that all of the ratios are greater than 1. 
More importantly, this variance reduction implies more precise inference 
in this life history analysis. For example, the envelope estimator can 
statistically distinguish ($\alpha = 0.05$, unadjusted for multiple 
comparisons) the second row of Table~\ref{Tab6:target.env} from the 
fifth row of Table~\ref{Tab6:target.env}. 
The combination of envelope methodology into the aster model framework 
allowed for us to consider a smaller set of traits associated with high 
Darwinian fitness.

\begin{table}
 \vspace*{-6pt}
 \centering
 \def\~{\hphantom{0}}
  \caption{Comparison of the maximum likelihood estimator and the 
    envelope estimator for seven individuals with high estimates of 
    expected Darwinian fitness in Example 2. The fifth column is 
    the ratio of bootstrapped standard errors for 
    $\hat{g}_{\text{MLE}}$ to $\hat{g}_{\text{env}}$.}
\label{Tab6:target.env}
\begin{tabular*}{\columnwidth}{c@{\extracolsep{\fill}}c@{\extracolsep{\fill}}c@{\extracolsep{\fill}}c@{\extracolsep{\fill}}c@{\extracolsep{\fill}}c@{}}
 $\hat{g}_{\text{env}}$ & se$\left(\hat{g}_{\text{env}}\right)$ & $\hat{g}_{\text{MLE}}$ & se$\left(\hat{g}_{\text{MLE}}\right)$ & ratio \\
 9.646 & 0.326 &  9.171 & 0.642 & 1.973 \\
 8.640 & 0.300 &  8.887 & 0.369 & 1.230 \\
 7.659 & 0.315 &  7.603 & 0.361 & 1.144 \\
 7.517 & 0.539 &  7.010 & 0.649 & 1.205 \\
10.943 & 0.607 & 10.475 & 0.896 & 1.476 \\
 7.329 & 0.707 &  6.618 & 1.038 & 1.469 \\
 7.498 & 0.521 &  7.522 & 0.658 & 1.263 \\
\end{tabular*}\vskip18pt
\end{table}


\section{Software} 
\label{s:software}
This paper is accompanied by an R package \texttt{envlpaster}
\citep{envlp-package}, which requires the two R packages for aster models: 
\texttt{aster} \citep{aster-package} and \texttt{aster2} 
\citep{aster2-package}, and also a technical report \citep{ecktech} that 
reproduces the examples in this paper and shows how functions in the 
\texttt{envlpaster} package are used.

\section{Discussion}
\label{s:discussion}

One could think to perform 
envelope methodology with respect to the regression coefficients $\beta$ 
instead of $\tau$. However $\beta$ is not well-defined, one can shift $\beta$ 
with an arbitrarily chosen offset vector without changing the value of the 
mean-value parameters $\tau$ and $\mu$. In addition, whenever we have 
categorical predictors, R software automatically drops one category when it 
has an intercept in the formula, but which category it drops is arbitrary and 
changing which is dropped changes $\beta$.  Envelope methodology is not invariant 
to this form of arbitrary shifting. 
To the best of our knowledge, the exponential family regression applications 
in the envelope model literature exclusively seek inference about $\beta$ 
because this well-definedness issue is not a problem in those applications 
\citep{cook, su, su-inner, cook-scale, cook-pls, 
  cook2, cook-scale-pls, su-group, eck2}. 
The applications of envelope methods to aster models is therefore outside of 
the scope of previous applications to exponential family models. Additionally, 
the methods in this paper can be extended to functions of parameters in 
generalized linear regression models. Aster model are a generalization 
of generalized linear regression models \citep{shaw2}. 



The consequences of potential model selection errors served as the motivation 
for the implementation of the bootstrap procedure in \cite{eck2}. 
In that article, inferences are only given for 
a canonical parameter vector in a multivariate linear regression model. 
We applied the \cite{efron} bootstrap procedures to alleviate 
possible model selection concerns. However, this particular choice of a 
bootstrap procedure is not without flaws. \cite{hjort} mentions that Efron 
does not derive the asymptotic distribution of the final estimator. 
The literature has not reached a consensus 
on the appropriate bootstrap procedure to be implemented when bootstrapping 
depends on data-driven model selection. 
As the literature currently stands, \cite{efron} provides a reasonable 
solution to the problem of potential model selection errors in the application 
of envelope methodology to aster models. The parametric bootstrap does not 
rely on asymptotic normality since it simulates the exact sampling 
distribution of the estimator for some parameter value, and the double 
bootstrap simulates the exact sampling distribution of the estimator for a 
long list of parameter values \citep{geyerLLN}.

Our new envelope estimator does not involve any non-convex optimization 
routines that are both sensitive to starting values and have potential 
problems with local minima. These computational problems can be detrimental 
to the performance of the 1D algorithm. The underlying theory of the 1D 
algorithm justifies the consistency properties of our new envelope estimator. 
In envelope modelling problems with a small number of 
parameters of interest the envelope estimator constructed directly from 
reducing subspaces is preferred since it possesses the same strengths as 
the 1D algorithm without its potential numerical pitfalls. However our 
estimator is currently expensive to compute in moderate $p$ problems. 
In aster analyses $p$ is typically small. 

In many life history analyses, specific trait values which are estimated to 
produce the highest expected Darwinian fitness are of interest. It is common 
practice to only report such trait values \citep{shaw, eck}. Such reporting 
ignores the variability associated with the estimation of expected Darwinian 
fitness. There are likely many trait values having estimated expected 
Darwinian fitness that is statistically indistinguishable from the reported 
values. Our methodology addresses this concern directly. The potential set of 
candidate traits associated with high values of expected Darwinian fitness is 
smaller when the combination of envelope methodology into the aster modelling 
framework is utilized as seen in \citet{ecktech}.
Researchers using our methods will have the potential to make stronger 
inferences about expected Darwinian fitness through our variance reduction 
techniques.

\section*{Acknowledgements}
Daniel J. Eck's research is supported by NIH/NIHCD grant 1DP2HD091799-01. 
We would like to thank David B. Lowry for providing the dataset used in 
Example 2, Xin Zhang for the code that implements the 1D algorithm, and 
Forrest W. Crawford for helpful discussion that led to the strengthening of 
this paper. We would also like to especially thank Amber Eule-Nashoba for 
helpful comments on the technical report.

\bibliographystyle{chicago}

\bibliography{asterenvsources}

\end{document}